\def\arXiv{}
\newcommand{\eps}{\varepsilon}
\newcommand{\cupdot}{\mathbin{\mathaccent\cdot\cup}}
\DeclareMathOperator{\skel}{skel}
\DeclareMathOperator{\pert}{pert}
\DeclareMathOperator{\twin}{twin}
\let\doendproof\endproof
\renewcommand\endproof{~\hfill$\qed$\doendproof}
\title{A New Perspective on Clustered Planarity \\as a Combinatorial
  Embedding Problem\thanks{Partly done within GRADR -- EUROGIGA
    project no. 10-EuroGIGA-OP-003.  Supported by a fellowship within
    the Postdoc-Program of the German Academic Exchange Service
    (DAAD).}}
\author{Thomas Bläsius \and Ignaz Rutter}
\institute{Faculty of Informatics, Karlsruhe Institute of Technology
  (KIT), Germany\\%
  \email{\{blaesius,rutter\}@kit.edu}}
\title{\Large A New Perspective on Clustered Planarity \\as a
  Combinatorial Embedding Problem\thanks{Partly done within GRADR --
    EUROGIGA project no. 10-EuroGIGA-OP-003.  Supported by a
    fellowship within the Postdoc-Program of the German Academic
    Exchange Service (DAAD).}}
\author{Thomas Bläsius \myand Ignaz Rutter}
\date{Karlsruhe Institute of Technology (KIT), Germany\\%
  \medskip \texttt{\{blaesius,rutter\}@kit.edu}}
\begin{document}
\maketitle

\begin{abstract}
  The clustered planarity problem (c-planarity) asks whether a
  hierarchically clustered graph admits a planar drawing such that the
  clusters can be nicely represented by regions.  We introduce the
  cd-tree data structure and give a new characterization of
  c-planarity.  It leads to efficient algorithms for c-planarity
  testing in the following cases.
  \begin{inparaenum}[(i)]
  \item Every cluster and every co-cluster (complement of a cluster)
    has at most two connected components.
  \item Every cluster has at most five outgoing edges.
  \end{inparaenum}

  Moreover, the cd-tree reveals interesting connections between
  c-planarity and planarity with constraints on the order of edges
  around vertices.  On one hand, this gives rise to a bunch of new
  open problems related to c-planarity, on the other hand it provides
  a new perspective on previous results.
\end{abstract}

\keywords{graph drawing, clustered planarity, constrained planar
  embedding, characterization, algorithms}

\section{Introduction}
\label{sec:introduction}

When visualizing graphs whose nodes are structured in a hierarchy, one
usually has two objectives.  First, the graph should be drawn nicely.
Second, the hierarchical structure should be expressed by the drawing.
Regarding the first objective, we require drawings without edge
crossings, i.e., \emph{planar drawings} (the number of crossings in a
drawing of a graph is a major aesthetic criterion).  A natural way to
represent a cluster is a simple region containing exactly the vertices
in the cluster.  To express the hierarchical structure, the boundaries
of two regions must not cross and edges of the graph can cross region
boundaries at most once, namely if only one of its endpoints lies
inside the cluster.  Such a drawing is called \emph{c-planar}; see
Section~\ref{sec:preliminaries} for a formal definition.  Testing a
clustered graph for \emph{c-planarity} (i.e., testing whether it
admits a c-planar drawing) is a fundamental open problem in the field
of Graph Drawing.

C-planarity was first considered by Lengauer~\cite{l-hpta-89} but in a
completely different context.  He gave an efficient algorithm for the
case that every cluster is connected.  Feng et al.~\cite{fce-pcg-95},
who coined the name c-planarity, rediscovered the problem and gave a
similar algorithm.
Cornelsen and Wagner~\cite{cw-cccg-06} showed that c-planarity is
equivalent to planarity when every cluster and every co-cluster is
connected.

Relaxing the condition that every cluster must be connected, makes
testing c-planarity surprisingly difficult.  Efficient algorithms are
known only for very restricted cases and many of these algorithms are
very involved.  One example is the efficient algorithm by Jel\'inek et
al.~\cite{jjkl-cp-09,jjkl-cp-09-long} for the case that every cluster
consists of at most two connected components while the planar
embedding of the underlying graph is fixed.  Another efficient
algorithm by Jel\'inek et al.~\cite{jstv-cp-09} solves the case that
every cluster has at most four outgoing edges.

A popular restriction is to require a \emph{flat} hierarchy, i.e.,
every pair of clusters has empty intersection.  For example, Di
Battista and Frati~\cite{g-ecpte-08} solve the case where the
clustering is flat, the graph has a fixed planar embedding and the
size of the faces is bounded by five.
Section~\ref{sec:related-work-flat} and
Section~\ref{sec:related-work-connected} contain additional related
work viewed from the new perspective.

\subsection{Contribution \& Outline}
\label{sec:contr-outl}

We first present the cd-tree data structure
(Section~\ref{sec:cd-tree}), which is similar to a data structure used
by Lengauer~\cite{l-hpta-89}.  We use the cd-tree to characterize
c-planarity in terms of a combinatorial embedding problem.  We believe
that our definition of the cd-tree together with this characterization
provides a very useful perspective on the c-planarity problem and
significantly simplifies some previous results.

In Section~\ref{sec:clust-constr-plan} we define different
constrained-planarity problems.  We use the cd-tree to show in
Section~\ref{sec:flat-clustered-graph} that these problems are
equivalent to different versions of the c-planarity problem on
flat-clustered graphs.  We also discuss which cases of the constrained
embedding problems are solved by previous results on c-planarity of
flat-clustered graphs.  Based on these insights, we derive a generic
algorithm for testing c-planarity with different restrictions in
Section~\ref{sec:gener-clust-graphs} and discuss previous work in this
context.

In Section~\ref{sec:c-planarity-pq}, we show how the cd-tree
characterization together with results on the problem
\textsc{Simultaneous PQ-Ordering}~\cite{br-spoacep-13} lead to
efficient algorithms for the cases that
\begin{inparaenum}[(i)]
\item every cluster and every co-cluster consists of at most two
  connected components; or
\item every cluster has at most five outgoing edges.
\end{inparaenum}
The latter extends the result by Jel\'inek et al.~\cite{jstv-cp-09},
where every cluster has at most four outgoing edges.

\section{Preliminaries}
\label{sec:preliminaries}

We denote graphs by $G$ with vertex set $V$ and edge set $E$.  We
implicitly assume graphs to be \emph{simple}, i.e., they do not have
multiple edges or loops.  Sometimes we allow multiple edges (we never
allow loops).  We indicate this with the prefix \emph{multi-}, e.g., a
multi-cycle is a graph obtained from a cycle by multiplying edges.

A (multi-)graph $G$ is \emph{planar} if it admits a planar drawing (no
edge crossings).  The \emph{edge-ordering} of a vertex $v$ is the
clockwise cyclic order of its incident edges in a planar drawing of
$G$.  A \emph{(planar) embedding} of $G$ consists of an edge-ordering
for every vertex such that $G$ admits a planar drawing with these
edge-orderings. 

A \emph{PQ-tree}~\cite{bl-tcopi-76} is a tree $T$ (in our case
unrooted) with leaves $L$ such that every inner node is either a
\emph{P-node} or a \emph{Q-node}.  When embedding $T$, one can choose
the (cyclic) edge-orderings of P-nodes arbitrarily, whereas the
edge-orderings of Q-nodes are fixed up to reversal.  Every such
embedding of $T$ defines a cyclic order on the leaves $L$.  The
PQ-tree $T$ \emph{represents} the orders one can obtain in this way.
A set of orders is \emph{PQ-representable} if it can be represented by
a PQ-tree.  It is not hard to see that the valid edge-orderings of
non-cutvertices in planar graphs are PQ-representable
(e.g.,~\cite{br-spoacep-13}).  Conversely, adding wheels around the
Q-nodes of a PQ-tree $T$ and connecting all leaves with a vertex $v$
yields a planar graph $G$ where the edge-orderings of $v$ in
embeddings of $G$ are represented by $T$ (e.g.,~\cite{l-hpta-89}).

\subsection{C-Planarity on the Plane and on the Sphere}
\label{sec:clustered-plan-cut-plan}

A \emph{clustered graph} $(G, T)$ is a graph $G$ together with a
rooted tree $T$ whose leaves are the vertices of $G$.  Let $\mu$ be a
node of $T$.  The tree $T_\mu$ is the subtree of $T$ consisting of all
successors of $\mu$ together with the root $\mu$.  The graph induced
by the leaves of $T_\mu$ is a \emph{cluster} in $G$.  We identify this
cluster with the node $\mu$.  We call a cluster \emph{proper} if it is
neither the whole graph (\emph{root cluster}) nor a single vertex (\emph{leaf
cluster}).

A \emph{c-planar drawing} of $(G, T)$ is a planar drawing of $G$ in
the plane together with a \emph{simple} (=~simply-connected) region
$R_\mu$ for every cluster $\mu$ satisfying the following properties.
\begin{inparaenum}[(i)]
\item Every region $R_\mu$ contains exactly the vertices of the
  cluster $\mu$.
\item Two regions have non-empty intersection only if one contains the
  other.
\item Edges cross the boundary of a region at most once.
\end{inparaenum}
A clustered graph is \emph{c-planar} if and only if it admits a
c-planar drawing.

The above definition of c-planarity relies on embeddings in the plane
using terms like ``outside'' and ``inside''.  Instead, one can
consider drawings on the sphere by considering the tree $T$ to be
unrooted instead of rooted, using cuts instead of clusters, and simple
closed curves instead of simple regions.  Let $\eps$ be an edge in
$T$.  Removing $\eps$ splits $T$ in two connected components.  As the
leaves of $T$ are the vertices of $G$, this induces a
\emph{corresponding cut} $(V_\eps, V_\eps')$ with $V_\eps' =
V\setminus V_\eps$ on~$G$.  For a c-planar drawing of $G$ on the
sphere, we require a planar drawing of $G$ together with a simple
closed curve $C_\eps$ for every cut $(V_\eps, V_\eps')$ with the
following properties.
\begin{inparaenum}[(i)]
\item The curve $C_\eps$ separates $V_\eps$ from $V_\eps'$.
\item No two curves intersect.
\item Edges of $G$ cross $C_\eps$ at most once.
\end{inparaenum}

Note that using clusters instead of cuts corresponds to orienting the
cuts, using one side as the cluster and the other side as the
cluster's complement (the \emph{co-cluster}).  This notion of
c-planarity on the sphere is equivalent to the one on the plane; one
simply has to choose an appropriate point on the sphere to lie in the
outer face.  The unrooted view has the advantage that it is more
symmetric (i.e., there is no difference between clusters and
co-clusters), which is sometimes desirable.  We use the rooted and
unrooted view interchangeably.

\section{The CD-Tree}
\label{sec:cd-tree}

Let $(G,T)$ be a clustered graph.  We introduce the \emph{cd-tree
  (cut- or cluster-decomposition-tree)} by enhancing each node of $T$
with a multi-graph that represents the decomposition of $G$ along its
cuts corresponding to edges in $T$; see
Fig.~\ref{fig:cluster-decomposition-tree}a and~b for an example.  We
note that Lengauer~\cite{l-hpta-89} uses a similar structure.  Our
notation is inspired by SPQR-trees~\cite{dt-omtc-96}.

Let $\mu$ be a node of $T$ with neighbors $\mu_1, \dots, \mu_k$ and
incident edges $\eps_i = \{\mu, \mu_i\}$ (for $i = 1, \dots, k$).
Removing $\mu$ separates $T$ into $k$ subtrees $T_1, \dots, T_k$.  Let
$V_1, \dots, V_k \subseteq V$ be the vertices of $G$ represented by
leaves in these subtrees.  The \emph{skeleton} $\skel(\mu)$ of $\mu$
is the multi-graph obtained from $G$ by contracting each subset $V_i$
into a single vertex $\nu_i$ (the resulting graph has multiple edges
but we remove loops).  These vertices $\nu_i$ are called \emph{virtual
  vertices}.
Note that skeletons of inner nodes of $T$ contain only virtual
vertices, while skeletons of leaves consist of one virtual and one
non-virtual vertex.
The node $\mu_i$ is the neighbor of $\mu$ \emph{corresponding} to
$\nu_i$ and the virtual vertex in $\skel(\mu_i)$ corresponding to
$\mu$ is the \emph{twin} of $\nu_i$, denoted by $\twin(\nu_i)$.  Note
that $\twin(\twin(\nu_i)) = \nu_i$.

The edges incident to $\nu_i$ are exactly the edges of $G$ crossing
the cut corresponding to the tree edge $\eps_i$.  Thus, the same edges
of $G$ are incident to $\nu_i$ and $\twin(\nu_i)$.  This gives a bound
on the total size $c$ of the cd-tree's skeletons (which we shortly
call the \emph{size of the cd-tree}).  The total number of edges in
skeletons of $T$ is twice the total size of all cuts represented by
$T$.  As edges might cross a linear number of cuts (but obviously not
more), the cd-tree has at most quadratic size in the number of
vertices of $G$, i.e., $c \in O(n^2)$.

Assume the cd-tree is rooted.  Recall that in this case every node
$\mu$ represents a cluster of $G$.  In analogy to the notion for
SPQR-trees, we define the \emph{pertinent graph} $\pert(\mu)$ of the
node $\mu$ to be the cluster represented by $\mu$.  Note that one
could also define the pertinent graph recursively, by removing the
virtual vertex corresponding to the parent of $\mu$ (the \emph{parent
  vertex}) from $\skel(\mu)$ and replacing each remaining virtual
vertex by the pertinent graph of the corresponding child of~sy$\mu$.
Clearly, the pertinent graph of a leaf of $T$ is a single vertex and
the pertinent graph of the root is the whole graph $G$.  A similar
concept, also defined for unrooted cd-trees, is the \emph{expansion
  graph}.  The expansion graph $\exp(\nu_i)$ of a virtual vertex
$\nu_i$ in $\skel(\mu)$ is the pertinent graph of its corresponding
neighbor $\mu_i$ of $\mu$, when rooting $T$ at $\mu$.  One can think
of the expansion graph $\exp(\nu_i)$ as the subgraph of $G$
represented by $\nu_i$ in $\skel(\mu)$.  As mentioned before, we use
the rooted and unrooted points of view interchangeably.

The leaves of a cd-tree represent singleton clusters that exist only
due to technical reasons.  It is often more convenient to consider
cd-trees with all leaves removed as follows.  Let $\mu$ be a node with
virtual vertex $\nu$ in $\skel(\mu)$ that corresponds to a leaf.  The
leaf contains $\twin(\nu)$ and a non-virtual vertex $v \in V$ in its
skeleton (with an edge between $\twin(\nu)$ and $v$ for each edge
incident to $v$ in $G$).  We replace $\nu$ in $\skel(\mu)$ with the
non-virtual vertex $v$ and remove the leaf containing $v$.  Clearly,
this preserves all clusters except for the singleton cluster.
Moreover, the graph $G$ represented by the cd-tree remains unchanged
as we replaced the virtual vertex $\nu$ by its expansion graph
$\exp(\nu) = v$.  In the following we always assume the leaves of
cd-trees to be removed.

\subsection{The CD-Tree Characterization}
\label{sec:cd-tree-char}

We show that c-planarity testing can be expressed in terms of
edge-orderings in embeddings of the skeletons of $T$.  

\begin{figure}[tb]
  \centering
  \includegraphics[page=1]{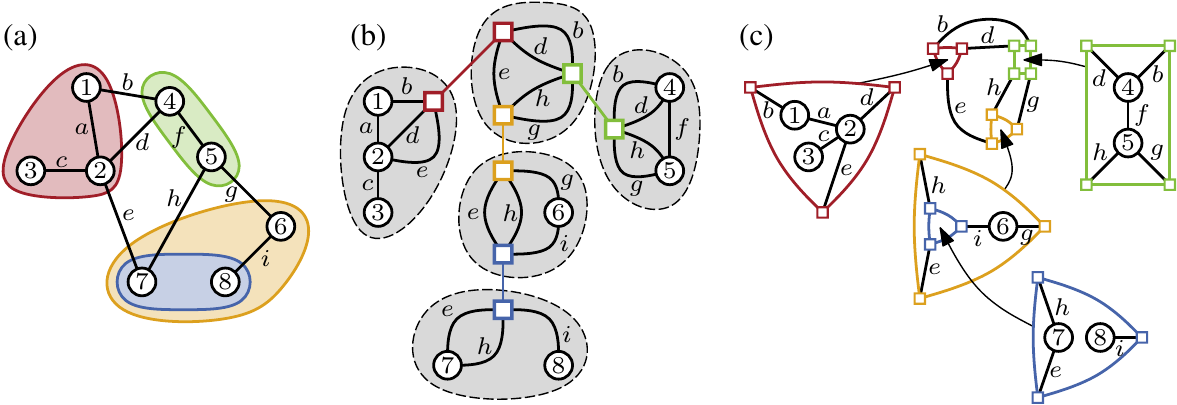}
  \caption{(a)~A c-planar drawing of a clustered graph.  (b)~The
    corresponding (rooted) cd-tree (without leaves).  The skeletons
    are drawn inside their corresponding (gray) nodes.  Every pair of
    twins (boxes with the same color) has the same edge-ordering.
    (c)~Construction of a c-planar drawing from the cd-tree.}
  \label{fig:cluster-decomposition-tree}
\end{figure}

\begin{theorem}
  \label{thm:characterization}
  A clustered graph is c-planar if and only if the skeletons of all
  nodes in its cd-tree can be embedded such that every virtual vertex
  and its twin have the same edge-ordering.
\end{theorem}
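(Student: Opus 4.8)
The plan is to prove both implications by exploiting the single structural fact established above: the edges incident to a virtual vertex $\nu_i$ in $\skel(\mu)$ are exactly the edges of $G$ crossing the cut corresponding to the tree edge $\eps_i = \{\mu,\mu_i\}$, so the edge-ordering of $\nu_i$ records the cyclic order in which these edges meet the separating curve $C_{\eps_i}$. Since the same holds for $\twin(\nu_i)$ in $\skel(\mu_i)$, both twins are bound to the \emph{same} physical curve, and the theorem amounts to saying that a global planar drawing exists if and only if these local crossing orders can be made consistent across every cut. I would work on the sphere (unrooted view), where cuts and curves are symmetric, and organise the construction recursively along $T$ using expansion graphs.

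For the forward direction I start from a c-planar drawing on the sphere together with its family of curves $\{C_{\eps_i}\}$. Fixing one cut $\eps_i$, the curve $C_{\eps_i}$ splits the sphere into the $\mu$-side and the $\mu_i$-side; contracting everything on the $\mu_i$-side to a point yields precisely a planar drawing of $\skel(\mu)$ (the curves nested inside $C_{\eps_i}$ certify that the contraction stays planar), and symmetrically for $\skel(\mu_i)$. In both contractions the incident edges leave the new vertex in the order in which they cross $C_{\eps_i}$; the only difference is that $\nu_i$ reads this order from the $\mu_i$-side while $\twin(\nu_i)$ reads it from the $\mu$-side, i.e.\ the two clockwise edge-orderings are reverses of one another. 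To upgrade ``reverse'' to the literal ``same'' demanded by the statement I use that a planar embedding may be reflected, which simultaneously reverses all edge-orderings of a skeleton: choosing a proper $2$-colouring of the tree $T$ and reflecting the skeletons of one colour class flips the relation on every tree edge at once, leaving every twin pair with identical edge-orderings.

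For the converse I assume embeddings of all skeletons with matching twins and build a drawing by induction over $T$ rooted arbitrarily. Embedding $\skel(\mu)$ in a disk, I replace each virtual vertex $\nu_i$ by a small sub-disk into which I recursively paste a drawing of the expansion graph $\exp(\nu_i)=\pert(\mu_i)$, whose outer interface is governed by $\twin(\nu_i)$. Because the edge-orderings of $\nu_i$ and $\twin(\nu_i)$ agree (after reflecting one colour class of the bipartition, so that the orientations needed for nesting line up), the edges crossing the boundary of the sub-disk can be matched to the stubs of the expansion graph in a single consistent cyclic order, so the paste introduces no crossings. The boundary of each sub-disk is a simple closed curve crossed exactly once by each incident cut-edge and separating the expansion graph from the rest; taking these boundaries as the curves $C_{\eps_i}$ (equivalently, the cluster regions in the rooted view) yields a c-planar drawing.

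The main obstacle is the orientation bookkeeping hidden in the phrase ``same edge-ordering'': the physically gluable condition is that twins be \emph{reverses}, and both directions only reach the clean ``same'' formulation because embeddings are defined up to reflection and $T$, being a tree, is bipartite. I would therefore be most careful in verifying (i) that reflecting one colour class really makes every twin pair simultaneously consistent, and (ii) that each local paste is planar, i.e.\ that equality of the cyclic orders of $\nu_i$ and $\twin(\nu_i)$ is exactly what is needed to nest $\exp(\nu_i)$ without crossings. The remaining steps---the base case of single-vertex pertinent graphs, the harmlessness of the removed leaf clusters, and the handling of multi-edges and removed loops in the skeletons---are routine.
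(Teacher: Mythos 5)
Your proposal is correct, and its forward direction coincides with the paper's proof essentially verbatim: contract the inside of each curve $C_i$ to a point to obtain an embedding of $\skel(\mu)$, observe that a virtual vertex and its twin read the crossing order of $C_i$ from opposite sides and are therefore reversed, and repair this by reflecting the skeletons of one colour class of a $2$-colouring of $T$. Where you genuinely diverge is the converse. The paper does not induct over a rooted tree; instead it performs a single global gluing: each virtual vertex $\nu_i$ with edge-ordering $e_1,\dots,e_\ell$ is replaced by a cycle $C_i$ of length $\ell$, its twin by the twin cycle $\twin(C_i)$, and the two skeleton embeddings are identified along these cycles (flipping one of them), yielding one embedded planar graph $G^+$ that simultaneously contains a subdivision of $G$ and, as actual drawn cycles, disjoint curves for all cuts. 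Your version instead roots $T$, embeds the root skeleton, and recursively pastes a drawing of each expansion graph $\exp(\nu_i)$ into a small disk at $\nu_i$, taking the nested disk boundaries as the curves. Both constructions rest on the same disk-gluing fact, and your careful attention to the reversed-versus-equal orientation issue (resolved by bipartiteness of $T$) is exactly the subtlety the paper also has to address; what the paper's formulation buys is that planarity of the composite and the disjointness of the curves are witnessed at once by the single embedded graph $G^+$, with no inductive bookkeeping, while your recursive pasting is more self-contained but obliges you to verify planarity of each paste and the curve properties along the induction.
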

\begin{proof}
  Assume $G$ admits a c-planar drawing $\Gamma$ on the sphere.  Let
  $\mu$ be a node of $T$ with incident edges $\eps_1, \dots, \eps_k$
  connecting $\mu$ to its neighbors $\mu_1, \dots, \mu_k$,
  respectively.  Let further $\nu_i$ be the virtual vertex in
  $\skel(\mu)$ corresponding to $\mu_i$ and let $V_i$ be the nodes in
  the expansion graph $\exp(\nu_i)$.  For every cut $(V_i, V_i')$
  (with $V_i' = V \setminus V_i$), $\Gamma$ contains a simple closed
  curve $C_i$ representing it.  Since the $V_i$ are disjoint, we can
  choose a point on the sphere to be the outside such that $V_i$ lies
  inside $C_i$ for $i = 1, \dots, k$.  Since $\Gamma$ is a c-planar
  drawing, the $C_i$ do not intersect and only the edges of $G$
  crossing the cut $(V_i, V_i')$ cross $C_i$ exactly once.  Thus, one
  can contract the inside of $C_i$ to a single point while preserving
  the embedding of $G$.  Doing this for each of the curves $C_i$
  yields the skeleton $\skel(\mu)$ together with a planar embedding.
  Moreover, the edge-ordering of the vertex $\nu_i$ is the same as the
  order in which the edges cross the curve $C_i$, when traversing
  $C_i$ in clockwise direction.  Applying the same construction for
  the neighbour $\mu_i$ corresponding to $\nu_i$ yields a planar
  embedding of $\skel(\mu_i)$ in which the edge-ordering of
  $\twin(\nu_i)$ is the same as the order in which these edges cross
  the curve $C_i$, when traversing $C_i$ in counter-clockwise
  direction.  Thus, in the resulting embeddings of the skeletons, the
  edge-ordering of a virtual vertex and its twin is the same up to
  reversal.  To make them the same one can choose a 2-coloring of $T$
  and mirror the embeddings of all skeletons of nodes in one color
  class.

  Conversely, assume that all skeletons are embedded such that every
  virtual vertex and its twin have the same edge-ordering.  Let $\mu$
  be a node of $T$.  Consider a virtual vertex $\nu_i$ of $\skel(\mu)$
  with edge-ordering $e_1, \dots, e_\ell$.  We replace $\nu_i$ by a
  cycle $C_i = (\nu_i^1, \dots, \nu_i^\ell)$ and attach the edge $e_j$
  to the vertex $\nu_i^j$; see
  Fig.~\ref{fig:cluster-decomposition-tree}c.  Recall that
  $\twin(\nu_i)$ has in $\skel(\mu_i)$ the same incident edges $e_1,
  \dots, e_\ell$ and they also appear in this order around
  $\twin(\nu_i)$.  We also replace $\twin(\nu_i)$ by a cycle of length
  $\ell$.  We say that this cycle is the \emph{twin} of $C_i$ and
  denote it by $\twin(C_i) = (\twin(\nu_i^1), \dots,
  \twin(\nu_i^\ell))$ where $\twin(\nu_i^j)$ denotes the new vertex
  incident to the edge $e_j$.  As the interiors of $C_i$ and
  $\twin(C_i)$ are empty, we can glue the skeletons $\skel(\mu)$ and
  $\skel(\twin(\mu))$ together by identifying the vertices of $C_i$
  with the corresponding vertices in $\twin(C_i)$ (one of the
  mbeddings has to be flipped).  Applying this replacement for every
  virtual vertex and gluing it with its twin leads to an embedded
  planar graph $G^+$ with the following properties.  First, $G^+$
  contains a subdivision of $G$.  Second, for every cut corresponding
  to an edge $\eps = \{\mu, \mu_i\}$ in $T$, $G^+$ contains the cycle
  $C_i$ with exactly one subdivision vertex of an edge $e$ of $G$ if
  the cut corresponding to $\eps$ separates the endpoints of $e$.
  Third, no two of these cycles share a vertex.  The planar drawing of
  $G^+$ gives a planar drawing of $G$.  Moreover, the drawings of the
  cycles can be used as curves representing the cuts, yielding a
  c-planar drawing of $G$.
\end{proof}

\subsection{Cutvertices in Skeletons}

We show that cutvertices in skeletons correspond to different
connected components in a cluster or in a co-cluster.  More precisely,
a cutvertex directly implies disconnectivity, while the opposite is
not true.  Consider the example in
Fig.~\ref{fig:cluster-decomposition-tree}.  The cutvertex in the
skeleton containing the vertices~$7$ and~$8$ corresponds to the two
connected components in the blue cluster (containing $7$ and $8$).
However, the two connected components in the orange cluster
(containing $6$--$8$) do not yield a new cutvertex in the skeleton
containing the vertex~$6$.  The following lemma in particular shows
that requiring every cluster to be connected implies that the parent
vertices of skeletons cannot be cutvertices.

\begin{lemma}
  \label{lem:cutvertex-connected-components}
  Let $\nu$ be a virtual vertex that is a cutvertex in its skeleton.
  The expansion graphs of virtual vertices in different blocks
  incident to $\nu$ belong to different connected components in
  $\exp(\twin(\nu))$.
\end{lemma}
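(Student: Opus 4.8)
The plan is to reduce the statement to the elementary fact that contracting vertex sets can only merge connected components, never split them. Throughout, let $\mu$ be the node whose skeleton contains $\nu$, let $V_\nu \subseteq V$ be the vertex set of $\exp(\nu)$, and write $\co{V_\nu} = V \setminus V_\nu$.

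First I would unwind the definitions. By the definition of the expansion graph, $\exp(\twin(\nu))$ is exactly the induced subgraph $G[\co{V_\nu}]$. Moreover, the vertices of $\skel(\mu)$ other than $\nu$ (virtual and non-virtual) represent a partition of $\co{V_\nu}$ into the vertex sets of their expansion graphs, and the edges of $G$ incident to $\nu$ are precisely those crossing the corresponding cut, i.e.\ those with exactly one endpoint in $V_\nu$. Deleting $\nu$ therefore removes exactly those edges, so $\skel(\mu) - \nu$ is obtained from $G[\co{V_\nu}]$ by contracting the expansion graph of each virtual vertex $\nu_\ell \ne \nu$ back into $\nu_\ell$ (the non-virtual vertices being kept as singletons). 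This identification is the key observation, and establishing it cleanly — keeping track of virtual versus non-virtual vertices and of which edges survive the contraction — is where I expect the main (though routine) effort to lie; the remaining graph theory is straightforward.

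Next I would invoke the block structure at the cutvertex $\nu$. Let $B$ and $B'$ be the two distinct blocks incident to $\nu$ that contain the virtual vertices $\nu_i$ and $\nu_j$, respectively. Since $\nu$ is a cutvertex, the connected components of $\skel(\mu) - \nu$ are in bijection with the blocks incident to $\nu$; hence $\nu_i$ and $\nu_j$ lie in different components of $\skel(\mu) - \nu$, or equivalently, every walk from $\nu_i$ to $\nu_j$ in $\skel(\mu)$ must pass through $\nu$.

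Finally I would combine the two facts. Suppose for contradiction that $\exp(\nu_i)$ and $\exp(\nu_j)$ lay in the same connected component of $\exp(\twin(\nu)) = G[\co{V_\nu}]$. Then there is a path $P$ in $G[\co{V_\nu}]$ joining a vertex of $\exp(\nu_i)$ to a vertex of $\exp(\nu_j)$. Since $P$ avoids $V_\nu$, applying the contraction from the first step maps $P$ to a walk in $\skel(\mu) - \nu$ that never uses $\nu$ and joins $\nu_i$ to $\nu_j$, contradicting the second step. The only things to verify here are that the image of a path under the contraction is indeed a connected walk (consecutive vertices either coincide, when the edge between them becomes a loop, or stay adjacent) and that it remains inside $\skel(\mu) - \nu$, which holds because $P \subseteq G[\co{V_\nu}]$. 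This contradiction shows that $\exp(\nu_i)$ and $\exp(\nu_j)$ lie in different connected components of $\exp(\twin(\nu))$, as claimed.
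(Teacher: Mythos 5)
Your proof is correct and follows essentially the same route as the paper's: both rest on the observation that $\exp(\twin(\nu))$ and $\skel(\mu) - \nu$ differ only by expanding/contracting the remaining virtual vertices, combined with the standard fact that distinct blocks at a cutvertex fall into distinct components once the cutvertex is removed. The paper states this in the expansion direction (replacing virtual vertices by their expansion graphs can only split components), whereas you argue the contrapositive via contraction (a path in $\exp(\twin(\nu))$ would project to a walk avoiding $\nu$); this is a cosmetic difference, though your write-up spells out the details the paper leaves as ``clearly.''
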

\begin{proof}
  Let $\mu$ be the node whose skeleton contains $\nu$.  Recall that
  one can obtain the graph $\exp(\twin(\nu))$ by removing $\nu$ from
  $\skel(\mu)$ and replacing all other virtual vertices of
  $\skel(\mu)$ with their expansion graphs.  Clearly, this yields (at
  least) one different connected component for each of the blocks
  incident to $\nu$.
\end{proof}

While the converse of Lemma~\ref{lem:cutvertex-connected-components}
is generally not true, it holds if the condition is satisfied for all
parent vertices in all skeletons simultaneously.

\begin{lemma}
  \label{lem:connected-clusters-no-cutvertices}
  Every cluster in a clustered graph is connected if and only if in
  every node $\mu$ of the rooted cd-tree the parent vertex is not a
  cutvertex in $\skel(\mu)$.
\end{lemma}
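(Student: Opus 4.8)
The plan is to prove both directions of the biconditional, using the recursive description of the pertinent graph as the main tool. I would set up the key fact, already noted in the excerpt, that $\exp(\twin(\nu))$ is obtained from $\skel(\mu)$ by deleting $\nu$ and expanding every other virtual vertex. Since $\pert(\mu)$ is precisely $\exp(\twin(\nu_{\mathrm{par}}))$ where $\nu_{\mathrm{par}}$ is the parent vertex of $\mu$, the connectivity of the cluster represented by $\mu$ is exactly the connectivity of $\skel(\mu)$ after deleting the parent vertex and expanding the remaining (child) virtual vertices.

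For the forward direction, I would argue the contrapositive. Suppose some parent vertex $\nu$ is a cutvertex in its skeleton $\skel(\mu)$. Then Lemma~\ref{lem:cutvertex-connected-components} applies directly: the expansion graphs of virtual vertices sitting in different blocks at $\nu$ land in different connected components of $\exp(\twin(\nu)) = \pert(\mu)$, so the cluster corresponding to $\mu$ is disconnected. Hence if every cluster is connected, no parent vertex can be a cutvertex.

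For the converse, I would prove the contrapositive as well: assuming some cluster is disconnected, I want to locate a parent vertex that is a cutvertex. The natural move is to take $\mu$ to be a \emph{minimal} disconnected cluster, i.e.\ a node of the rooted cd-tree whose pertinent graph is disconnected but all of whose children have connected pertinent graphs. Then in $\skel(\mu)$, after deleting the parent vertex, each remaining virtual vertex expands to a connected subgraph; I would argue that the induced ``contracted'' graph on the child virtual vertices must therefore itself be disconnected, since expanding connected pieces cannot merge components. From the disconnectedness of this contracted graph together with the fact that $\nu_{\mathrm{par}}$ is adjacent to the parts (every edge of $G$ leaving the cluster crosses the parent cut, so $\nu_{\mathrm{par}}$ touches each component), I would conclude that $\nu_{\mathrm{par}}$ is a cutvertex of $\skel(\mu)$.

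The main obstacle is the converse direction, specifically the bookkeeping around the minimal disconnected cluster: one must be careful that the parent vertex is genuinely adjacent to at least two separated parts and that deleting it actually disconnects the skeleton, rather than the disconnection being an artifact already present among the children. The minimality assumption is what makes this clean, since it guarantees every child expands to something connected, so any disconnection visible in $\pert(\mu)$ must be witnessed at the level of $\skel(\mu)$ with $\nu_{\mathrm{par}}$ as the separating vertex. I expect the forward direction to be essentially immediate from the previous lemma, so the bulk of the work and the one genuinely delicate argument lie in formalizing this minimality reduction.
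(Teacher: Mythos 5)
Your proposal is correct and follows essentially the same route as the paper: the forward direction is the direct application of Lemma~\ref{lem:cutvertex-connected-components}, and your ``minimal disconnected cluster'' is exactly the paper's assumption (made without loss of generality) that $\pert(\mu)$ is disconnected while every child $\pert(\mu_i)$ is connected, followed by the same observation that contracting connected subgraphs of a disconnected $\pert(\mu)$ leaves $\skel(\mu)\setminus\{\nu\}$ disconnected, whence $\nu$ is a cutvertex. Your extra care about $\nu_{\mathrm{par}}$ being adjacent to the separated parts is a fair (and slightly more explicit) treatment of a point the paper's proof passes over silently.
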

\begin{proof}
  By Lemma~\ref{lem:cutvertex-connected-components}, the existence of
  a cutvertex implies a disconnected cluster.  Conversely, let
  $\pert(\mu)$ be disconnected and assume without loss of generality
  that $\pert(\mu_i)$ is connected for every child $\mu_1, \dots,
  \mu_k$ of $\mu$ in the cd-tree.  One obtains $\skel(\mu)$ without
  the parent vertex $\nu$ by contracting in $\pert(\mu)$ the child
  clusters $\pert(\mu_i)$ to virtual vertices $\nu_i$.  As the
  contracted graphs $\pert(\mu_i)$ are connected while the initial
  graph $\pert(\mu)$ is not, the resulting graph must be disconnected.
  Thus, $\nu$ is a cutvertex in $\skel(\mu)$.
\end{proof}

\section{Clustered and Constrained Planarity}
\label{sec:clust-constr-plan}

We first describe several constraints on planar embeddings, each
restricting the edge-orderings of vertices.  We then show the relation
to c-planarity.

Consider a finite set $S$ (e.g., edges incident to a vertex).  Denote
the set of all cyclic orders of $S$ by $O_S$.  An
\emph{order-constraint} on $S$ is simply a subset of $O_S$ (only the
orders in the subset are \emph{allowed}).  A \emph{family of
  order-constraints} for the set $S$ is a set of different order
constraints, i.e., a subset of the power set of $O_s$.
We say that a family of order-constraints has a \emph{compact
  representation}, if one can specify every order-constraint in this
family with polynomial space (in $|S|$).  In the following we describe
families of order-constraints with compact representations.

A \emph{partition-constraint} is given by partitioning $S$ into
subsets $S_1 \cupdot \dots \cupdot S_k = S$.  It requires that no two
partitions \emph{alternate}, i.e., elements $a_i, b_i \in S_i$ and
$a_j, b_j \in S_j$ must not appear in the order $a_i, a_j, b_i, b_j$.
A \emph{PQ-constraint} requires that the order of elements in $S$ is
represented by a given PQ-tree with leaves~$S$.  A
\emph{full-constraint} contains only one order, i.e., the order of $S$
is completely fixed.

A \emph{partitioned full-constraint} restricts the orders of elements
in $S$ according to a partition constraint (partitions must not
alternate) and additionally completely fixes the order within each
partition.  Similarly, \emph{partitioned PQ-constraints} require the
elements in each partition to be ordered according to a PQ-constraint.
Clearly, this notion of partitioned order-constraints generalizes to
arbitrary order-constraints.


Consider a planar graph $G$.  By \emph{constraining} a vertex $v$ of
$G$, we mean that there is an order-constraint on the edges incident
to $v$.  We then only allow planar embeddings of $G$ where the
edge-ordering of $v$ is allowed by the order-constraint.  By
constraining $G$, we mean that several (or all) vertices of $G$ are
constrained.

\subsection{Flat-Clustered Graph}
\label{sec:flat-clustered-graph}

Consider a flat-clustered graph, i.e., a clustered graph where the
cd-tree is a star.  We choose the center $\mu$ of the star to be the
root.  Let $\nu_1, \dots, \nu_k$ be the virtual vertices in
$\skel(\mu)$ corresponding to the children $\mu_1, \dots, \mu_k$ of
$\mu$.  Note that $\skel(\mu_i)$ contains exactly one virtual vertex,
namely $\twin(\nu_i)$.  The possible ways to embed $\skel(\mu_i)$
restrict the possible edge-orderings of $\twin(\nu_i)$ and thus, by
the characterization in Theorem~\ref{thm:characterization}, the
edge-orderings of $\nu_i$ in $\skel(\mu)$.  Hence, the graph
$\skel(\mu_i)$ essentially yields an order constraint for $\nu_i$ in
$\skel(\mu)$.  We consider c-planarity with differently restricted
instances, leading to different families of order-constraints.  To
show that testing c-planarity is equivalent to testing whether
$\skel(\mu)$ is planar with respect to order-constraints of a specific
family, we have to show two directions.  First, the embeddings of
$\skel(\mu_i)$ only yield order-constraints of the given family.
Second, we can get every possible order-constraint of the given family
by choosing an appropriate graph for
$\skel(\mu_i)$.  

\begin{figure}[tb]
  \centering
  \includegraphics[page=1]{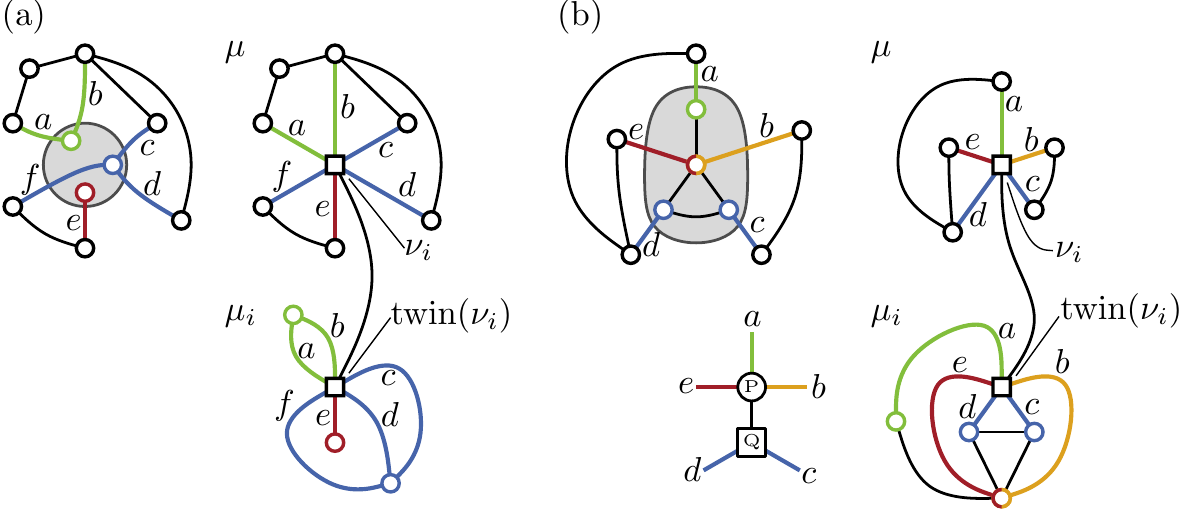}
  \caption{(a)~A graph with a single cluster consisting of isolated
    vertices together with an illustration of its cd-tree.  An
    edge-ordering of $\twin(\nu_i)$ corresponds to a planar embedding
    of $\skel(\mu_i)$ if and only if no two partitions of the
    partitioning $\{\{a, b\}, \{c, d, f\}, \{e\}\}$ alternate. (b)~A
    graph with a single connected cluster and its cd-tree.  The valid
    edge-orderings of $\twin(\nu_i)$ are represented by the shown
    PQ-tree.}
  \label{fig:flat-cluster-constrained-embedding}
\end{figure}

\begin{theorem}
  \label{thm:constrained-planarity-equivalence}
  Testing c-planarity of flat-clustered graphs
  \begin{inparaenum}[\normalfont(i)]
  \item \emph{where each proper cluster consists of isolated vertices;}
  \item \emph{where each cluster is connected;}
  \item \emph{with fixed planar embedding;}
  \item \emph{without restriction}
  \end{inparaenum}
  is linear-time equivalent to testing planarity of a multi-graph with
  \begin{inparaenum}[\normalfont(i)]
  \item \emph{partition-constraints;}
  \item \emph{PQ-constraints;}
  \item \emph{partitioned full-constraints;}
  \item \emph{partitioned PQ-constraints,}
  \end{inparaenum}
  respectively.
\end{theorem}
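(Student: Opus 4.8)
The plan is to apply Theorem~\ref{thm:characterization} to collapse c-planarity into a single constrained-planarity question on the one interesting skeleton, and then to read off, restriction by restriction, the exact family of order-constraints that the satellite skeletons can impose. By Theorem~\ref{thm:characterization}, the flat-clustered graph is c-planar if and only if one can embed $\skel(\mu)$ and all $\skel(\mu_i)$ so that every $\nu_i$ and its twin $\twin(\nu_i)$ receive the same edge-ordering. Since $\skel(\mu_i)$ contains $\twin(\nu_i)$ as its only virtual vertex, the skeletons interact solely through these shared orderings, and using the $2$-coloring of the star that puts $\mu$ in one class and all $\mu_i$ in the other (as in the proof of Theorem~\ref{thm:characterization}) lets us mirror each satellite independently, so the matching is ``up to reversal''. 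Thus c-planarity is equivalent to finding a planar embedding of the multi-graph $\skel(\mu)$ in which each $\nu_i$ obeys the order-constraint $\mathcal{C}_i\subseteq O_{S_i}$ on its incident edge set $S_i$, namely the set of cyclic orders realized (up to reversal) by $\twin(\nu_i)$ in some planar embedding of $\skel(\mu_i)$. It then suffices to show, for each of the four restrictions, that the $\mathcal{C}_i$ range over precisely the claimed family; this is exactly the two directions announced before the statement, and it establishes the bi-reduction (the reverse direction lets any instance of the constrained problem be encoded back as a flat-clustered graph).

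\emph{Forward direction (the orderings form the claimed family).} The key tool is the block decomposition of $\skel(\mu_i)$ at the vertex $\twin(\nu_i)$. By Lemma~\ref{lem:cutvertex-connected-components}, the blocks incident to $\twin(\nu_i)$ are in bijection with the connected components of $\pert(\mu_i)$. In any planar embedding the edges of distinct blocks occupy disjoint contiguous arcs around the cutvertex and the blocks may be nested in any non-alternating way, while the edges inside one block appear in an order that is an admissible edge-ordering of $\twin(\nu_i)$ as a \emph{non}-cutvertex of that block. The latter set is PQ-representable by the fact recalled in Section~\ref{sec:preliminaries}. Hence $\mathcal{C}_i$ is always a partitioned PQ-constraint, with the component partition of $\pert(\mu_i)$ as its partition and the block-local orders as its per-part PQ-constraints, settling case~(iv). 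Specializing: if $\pert(\mu_i)$ is connected then by Lemma~\ref{lem:connected-clusters-no-cutvertices} the vertex $\twin(\nu_i)$ is not a cutvertex, the partition is trivial, and $\mathcal{C}_i$ is a plain PQ-constraint, giving~(ii); if the embedding of $G$ is fixed then each block is rigid, so the per-part order is completely fixed and $\mathcal{C}_i$ is a partitioned full-constraint, giving~(iii); and if $\pert(\mu_i)$ consists of isolated vertices then every block is a bond between $\twin(\nu_i)$ and a single cluster vertex, whose parallel edges may be permuted arbitrarily, so the per-part constraint is vacuous and $\mathcal{C}_i$ is a plain partition-constraint, giving~(i).

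\emph{Reverse direction (every member is realizable).} For a given target constraint I would construct a matching $\skel(\mu_i)$ whose non-virtual vertices become the cluster and whose edges to $\twin(\nu_i)$ carry the prescribed incident edges of $\nu_i$. For a partition-constraint, place one isolated cluster vertex per part, joined to $\twin(\nu_i)$ by one edge per element, so that planarity forces exactly the non-alternation of parts. For a PQ-constraint, use the construction recalled in Section~\ref{sec:preliminaries}: add a wheel around each Q-node of the given PQ-tree and connect all its leaves to $\twin(\nu_i)$, yielding a connected cluster whose embeddings realize exactly the represented orders. For partitioned full- and partitioned PQ-constraints, take as the connected components of the cluster one rigid gadget (a cycle, respectively the PQ-gadget above) per part, each attached to $\twin(\nu_i)$; distinct components are forced into distinct blocks, reproducing the partition together with the prescribed per-part order. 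In each case the cluster is of the required type (isolated, connected, fixed-embedding, arbitrary), so the family is hit exactly, and all gadgets have linear size; together with the fact that the relevant skeletons of a flat-clustered graph have total size linear in that of $G$, this yields the claimed linear-time equivalences.

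The hard part will be the forward direction: proving that the edge-orderings of the cutvertex $\twin(\nu_i)$ decompose \emph{exactly} as ``non-alternating blocks, each with a PQ-representable internal order'', i.e.\ that planarity imposes nothing across blocks beyond non-alternation and nothing within a block beyond the PQ-representable constraint. This rests on the interplay between the block structure around a cutvertex and the PQ-representability of non-cutvertex orderings, and it is precisely what makes the partitioned order-constraints the exact combinatorial counterpart of disconnected clusters.
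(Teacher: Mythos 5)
Your proposal is correct and follows essentially the same route as the paper: reduce via Theorem~\ref{thm:characterization} to a constrained-planarity question on the central skeleton, derive the forward direction from the block decomposition at $\twin(\nu_i)$ (non-alternation across blocks, PQ-representability within each block, full orders when the embedding is fixed), and realize arbitrary constraints with the same gadgets (isolated vertices with parallel edges for partitions, the wheel construction for PQ-trees). The only deviations are presentational — you treat case~(iv) first and specialize, whereas the paper builds up from (i), and your block--component bijection really follows from the discussion around Lemmas~\ref{lem:cutvertex-connected-components} and~\ref{lem:connected-clusters-no-cutvertices} rather than from Lemma~\ref{lem:cutvertex-connected-components} alone — neither of which affects correctness.
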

\begin{proof}
  We start with case (i); see
  Fig.~\ref{fig:flat-cluster-constrained-embedding}a.  Consider a
  flat-clustered graph $G$ and let $\mu_i$ be one of the leaves of the
  cd-tree.  As $\pert(\mu_i)$ is a proper cluster, it consists of
  isolated vertices.  Thus, $\skel(\mu_i)$ is a set of vertices $v_1,
  \dots, v_\ell$, each connected (with multiple edges) to the virtual
  vertex $\twin(\nu_i)$.  The vertices $v_1, \dots, v_\ell$ partition
  the edges incident to $\twin(\nu_i)$ into $\ell$ subsets.  Clearly,
  in every planar embedding of $\skel(\mu_i)$ no two partitions
  alternate.  Moreover, every edge-ordering of $\twin(\nu_i)$ in which
  no two partitions alternate gives a planar embedding of
  $\skel(\mu_i)$.  Thus, the edges incident to $\nu_i$ in $\skel(\mu)$
  are constrained by a partition-constraint, where the partitions are
  determined by the incidence of the edges to the vertices $v_1,
  \dots, v_\ell$.  One can easily construct the resulting instance of
  planarity with partition-constraints problem in linear time in the
  size of the cd-tree.  Note that the cd-tree has linear size in $G$
  for flat-clustered graphs.

  Conversely, given a planar graph $H$ with partition-constraints, we
  set $\skel(\mu) = H$.  For every vertex of $H$ we have a virtual
  vertex $\nu_i$ in $\skel(\mu)$ with corresponding child~$\mu_i$.  We
  can simulate every partitioning of the edges incident to $\nu_i$ by
  connecting edges incident to $\twin(\nu_i)$ (in $\skel(\mu_i)$) with
  vertices such that two edges are connected to the same vertex if and
  only if they belong to the same partition.  Clearly, this
  construction runs in linear time.

  Case (ii) is illustrated in
  Fig.~\ref{fig:flat-cluster-constrained-embedding}b.  By
  Lemma~\ref{lem:connected-clusters-no-cutvertices} the condition of
  connected clusters is equivalent to requiring that the virtual
  vertex $\twin(\nu_i)$ in the skeleton of any leaf $\mu_i$ of the
  cd-tree is not a cutvertex.  The statement of the theorem follows
  from the fact that the possible edge-orderings of non-cutvertices is
  PQ-representable and that any PQ-tree can be achieved by choosing an
  appropriate planar graph in which $\twin(\nu_i)$ is not a cutvertex
  (see Section~\ref{sec:preliminaries}).

  In case (iii) the embedding of $G$ is fixed.  As in case (i), the
  blocks incident to $\twin(\nu_i)$ in $\skel(\mu_i)$ partition the
  edges incident to $\nu_i$ in $\skel(\mu)$ such that two partitions
  must not alternate.  Moreover, the fixed embedding of $G$ fixes the
  edge-ordering of non-virtual vertices and thus it fixes the
  embeddings of the blocks in $\skel(\mu_i)$.  Hence, we get
  partitioned full-constraints for~$\nu_i$.  Conversely, we can
  construct an arbitrary partitioned full-constraint as in case~(i).

  For case (iv) the arguments from case (iii) show that we again get
  partitioned order-constraints, while the arguments from case (ii)
  show that these order-constraints (for the blocks) are
  PQ-constraints.
\end{proof}

\subsubsection{Related Work}
\label{sec:related-work-flat}

Biedl~\cite{b-dppi-98} proposes different drawing-models for graphs
whose vertices are partitioned into two subsets.  The model matching
the requirements of c-planar drawings is called \emph{HH-drawings}.
Biedl et al.~\cite{bkm-dppii-98} show that one can test for the
existence of HH-drawings in linear time.  Hong and
Nagamochi~\cite{hn-sattb-14} rediscovered this result in the context
of 2-page book embeddings.  These results solve c-planarity for
flat-clustered graphs if the skeleton of the root node contains only
two virtual vertices.  This is equivalent to testing planarity with
partitioned PQ-constraints for multi-graphs with only two vertices
(Theorem~\ref{thm:constrained-planarity-equivalence}).  Thus, to solve
c-planarity for flat-clustered graphs, one needs to solve an embedding
problem on general planar multi-graphs that is so far only solved on a
set of parallel edges (with absolutely non-trivial algorithms).  This
indicates that we are still far away from solving the c-planarity
problem even for flat-clustered graphs.

Cortese et al.~\cite{cbpp-cccc-05} give a linear-time algorithm for
testing c-planarity of a flat-clustered cycle (i.e., $G$ is a simple
cycle) if the skeleton of the cd-tree's root is a multi-cycle.  The
requirement that $G$ is a cycle implies that the skeleton of each
non-root node in $T$ has the property that the blocks incident to the
parent vertex are simple cycles.  Thus, in terms of constrained
planarity, they show how to test planarity of multi-cycles with
partition-constraints where each partition has size two.  The result
can be extended to a special case of c-planarity where the clustering
is not flat.  However, the cd-tree fails to have easy-to-state
properties in this case, which shows that the cd-tree perspective of
course has some limitations.
Later, Cortese et al.~\cite{cbpp-ecpg-09} extended this result to the
case where $G$ is still a cycle, while the skeleton of the root can be
an arbitrary planar multi-graph that has a fixed embedding up to the
ordering of parallel edges.  This is equivalent to testing planarity
of such a graph with partition-constraints where each partition has
size two.

Jel\'inkov\'a et al.~\cite{jkk-cp-09} consider the case where each
cluster contains at most three vertices (with additional
restrictions).  Consider a cluster containing only two vertices $u$
and $v$.  If $u$ and $v$ are connected, then the region representing
the cluster can always be added, and we can omit the cluster.
Otherwise, the region representing the cluster in a c-planar drawing
implies that one can add the edge $uv$ to $G$, yielding an equivalent
instance.  Thus, one can assume that every cluster has size
exactly~$3$, which yields flat-clustered graphs.  In this setting they
give efficient algorithms for the cases that $G$ is a cycle and that
$G$ is 3-connected.  Moreover, they give an FPT-algorithm for the case
that $G$ is an \emph{Eulerian graph} with $k$ nodes, i.e., a graph
obtained from a 3-connected graph of size $k$ by multiplying and then
subdividing edges.

In case $G$ is 3-connected, its planar embedding is fixed and thus the
edge-ordering of non-virtual vertices is fixed.  Thus, one obtains
partitioned full-constraints with the restriction that there are only
three partitions.  Clearly, the requirement that $G$ is 3-connected
also restricts the possible skeletons of the root of the cd-tree.  It
is an interesting open question whether planarity with partitioned
full-constraints with at most three partitions can be tested
efficiently for arbitrary planar graphs.  In case $G$ is a cycle, one
obtains partition constraints with only three partitions and each
partition has size two.  Note that this in particular restricts the
skeleton of the root to have maximum degree~6.  Although these kind of
constraints seem pretty simple to handle, the algorithm by
Jel\'inkov\'a et al. is quite involved.  It seems like one barrier
where constrained embedding becomes difficult is when there are
partition constraints with three or more partitions (see also
Theorem~\ref{thm:general-algorithm-sim-pq-ord}).  The result about
Eulerian graphs in a sense combines the cases where $G$ is 3-connected
and a cycle.  A vertex has either degree two and thus yields a
partition of size two or it is one of the constantly many vertices
with higher degree, for which the edge-ordering is partly fixed.

Chimani et al.~\cite{cdfk-atcpe-14} give a polynomial-time algorithm
for embedded flat-clustered graphs with the additional requirement
that each face is incident to at most two vertices of the same
cluster.  This basically solves planarity with partitioned
full-constraints with some additional requirements.  We do not
describe how these additional requirements exactly restrict the
possible instances of constrained planarity.  However, we give some
properties that shed a light on why these requirements make planarity
with partitioned full-constraints tractable.

Consider the skeleton $\skel(\mu)$ of a (non-root) node $\mu$ of the
cd-tree.  As $G$ is a flat-clustered graph, $\skel(\mu)$ has only a
single virtual vertex.  Assume we choose planar embeddings with
consistent edge orderings for all skeletons (i.e., we have a c-planar
embedding of $G$).  Two non-virtual vertices $u$ and $v$ in
$\skel(\mu)$ that are incident to the same face of $\skel(\mu)$ are
then also incident to the same face of $G$.  Note that the converse is
not true, as two vertices that share a face in $G$ may be separated in
$\skel(\mu)$ due to the contraction of disconnected subgraphs.  As the
non-virtual vertices of $\skel(\mu)$ belong to the same cluster, at
most two of them can be incident to a common face of $\skel(\mu)$.
Thus, every face of $\skel(\mu)$ has at most two non-virtual vertices
on its boundary.  One implication of this fact is that every connected
component of the cluster is a tree for the following reason.  If a
connected component contains a cycle, it has at least two faces with
more than two vertices on the boundary.  In $\skel(\mu)$ only one of
the two faces can be split into several faces by the virtual vertex,
but the other face remains.

More importantly, the possible ways how the blocks incident to the
virtual vertex of $\skel(\mu)$ can be nested into each other is
heavily restricted.  In particular, embedding multiple blocks next to
each other into the same face of another block is not possible, as
this would result in a face of $\skel(\mu)$ with more than two
non-virtual vertices on its boundary.  In a sense, this enforces a
strong nesting of the blocks.  Thus, one actually obtains a variant of
planarity with partitioned full-constraints, where the way how the
partitions can nest is restricted beyond forbidding two partitions to
alternate.  These and similar restrictions on how partitions are
allowed to be nested lead to a variety of new constrained planarity
problems.  We believe that studying these restricted problems can help
to deepen the understanding of the more general partitioned
full-constraints or even partitioned PQ-constraints.

\subsection{General Clustered Graphs}
\label{sec:gener-clust-graphs}

Expressing c-planarity for general clustered graphs (not necessarily
flat) in terms of constrained planarity problems is harder for the
following reason.  Consider a leaf $\mu$ in the cd-tree.  The skeleton
of $\mu$ is a planar graph yielding (as in the flat-clustered case)
partitioned PQ-constraints for its parent $\mu'$.  This restricts the
possible embeddings of $\skel(\mu')$ and thus the order-constraints
one obtains for the parent of $\mu'$ are not necessarily again
partitioned PQ-constraints.

One can express this issue in the following, more formal way.  Let
$G$ be a planar multi-graph with vertices $v_1, \dots, v_n$ and
designated vertex $v = v_n$.  The map $\varphi_G^v$ maps a tuple
$(C_1, \dots, C_n)$ where $C_i$ is an order-constraint on the edges
incident to $v_i$ to an order-constraint $C$ on the edges incident to
$v$.  The order-constraint $C = \varphi_G^v(C_1, \dots, C_n)$ contains
exactly those edge-orderings of $v$ that one can get in a planar
embedding of $G$ that respects $C_1, \dots, C_n$.  Note that $C$ is
empty if and only if there is no such embedding.  Note further that
testing planarity with order-constraints is equivalent to deciding
whether $\varphi_G^v$ evaluates to the empty set.  We call such a map
$\varphi_G^v$ a \emph{constrained-embedding operation}.

The issue mentioned above (with constraints iteratively handed to the
parents) boils down to the fact that partitioned PQ-constraints are
not closed under constrained-embedding operations.  On the positive
side, we obtain a general algorithm for solving c-planarity as
follows.  Assume we have a family of order-constraints $\mathcal C$
with compact representations that is closed under
constrained-embedding operations.  Assume further that we can evaluate
the constrained embedding operations in polynomial time on
order-constraints in $\mathcal C$.  Then one can simply solve
c-planarity by traversing the cd-tree bottom-up, evaluating for a node
$\mu$ with parent vertex $\nu$ the constrained-embedding operation
$\varphi_{\skel(\mu)}^\nu$ on the constraints one computed in the same
way for the children of $\mu$.

Clearly, when restricting the skeletons of the cd-tree or requiring
properties for the parent vertices in these skeletons, these
restrictions carry over to the constrained-embedding operations one
has to consider.  More precisely, let $\mathcal R$ be a set of pairs
$(G, v)$, where $v$ is a vertex in $G$.  We say that a clustered graph
is \emph{$\mathcal R$-restricted} if $(\skel(\mu), \nu) \in \mathcal
R$ holds for every node $\mu$ in the cd-tree with parent vertex $\nu$.
Moreover, the \emph{$\mathcal R$-restricted} constrained-embedding
operations are those operations $\varphi_G^v$ with $(G, v) \in
\mathcal R$.  The following theorem directly follows.

\begin{theorem}
  \label{thm:solving-c-planarity-bottom-up}
  One can solve c-planarity of $\mathcal R$-restricted clustered
  graphs in polynomial time if there is a family $\mathcal C$ of
  order-constraints such that
  \begin{itemize}
  \item $\mathcal C$ has a compact representation,
  \item $\mathcal C$ is closed under $\mathcal R$-restricted
    constrained-embedding operations,
  \item every $\mathcal R$-restricted constrained-embedding operation
    on order-constraints in $\mathcal C$ can be evaluated in
    polynomial time.
  \end{itemize}
\end{theorem}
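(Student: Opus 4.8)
The plan is to formalize the bottom-up strategy sketched above into a dynamic program on the cd-tree whose correctness rests on Theorem~\ref{thm:characterization}. Recall that this characterization reduces c-planarity to the existence of planar embeddings of all skeletons in which every virtual vertex and its twin receive the same edge-ordering. Since each non-root node $\mu$ has a unique parent vertex $\nu$, whose twin is a virtual vertex in the parent's skeleton, the task decomposes naturally: for each $\mu$ I want to know precisely which edge-orderings of $\nu$ can be realized by consistently embedding $\skel(\mu)$ together with the entire subtree hanging below it. I would record this information as an order-constraint on the edges incident to $\nu$ and compute it for all nodes in a single bottom-up pass.

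Concretely, I would process the nodes of the cd-tree in post-order. A non-virtual vertex (one of the original vertices of $G$, appearing in the skeletons of the leaves) receives the trivial order-constraint allowing all orders; I would note that this constraint lies in every reasonable family $\mathcal C$ — in particular in all families considered in Theorem~\ref{thm:constrained-planarity-equivalence} — so it is an admissible input. When processing a node $\mu$ with parent vertex $\nu$, I assume inductively that for every child $\mu_i$, corresponding to a virtual vertex $\nu_i$ of $\skel(\mu)$, an order-constraint $C_i$ on $\twin(\nu_i)$ has already been computed; by the twin condition of Theorem~\ref{thm:characterization} this is exactly the order-constraint imposed on $\nu_i$ inside $\skel(\mu)$. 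I then evaluate the $\mathcal R$-restricted constrained-embedding operation $\varphi_{\skel(\mu)}^{\nu}$ on the tuple that assigns $C_i$ to each child virtual vertex $\nu_i$ and the trivial constraint to every remaining vertex; by the definition of $\varphi$, the result is the order-constraint on $\nu$ collecting exactly those edge-orderings of $\nu$ attainable in a planar embedding of $\skel(\mu)$ that respects all child constraints. At the root, which has no parent, I instead only test feasibility: the clustered graph is c-planar if and only if the constraints accumulated at the virtual vertices of the root admit some common planar embedding of the root skeleton, i.e. if and only if a constrained-embedding operation on that skeleton evaluates to a non-empty set.

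Correctness would follow by induction along the cd-tree, with the invariant that $C_i$ equals the set of edge-orderings of $\twin(\nu_i)$ that extend to a twin-consistent embedding (in the sense of Theorem~\ref{thm:characterization}) of all skeletons in the subtree rooted at $\mu_i$; the two possible orientations of each skeleton are reconciled by mirroring, exactly as in the proof of that theorem. The definition of $\varphi_{\skel(\mu)}^{\nu}$ makes the inductive step immediate, and composing the invariant over the whole tree shows that the root feasibility test succeeds precisely when all skeletons admit a simultaneous twin-consistent embedding, which is equivalent to c-planarity. For the running time I would invoke the three hypotheses in turn: because the clustered graph is $\mathcal R$-restricted, every operation evaluated is an $\mathcal R$-restricted one, so closedness of $\mathcal C$ guarantees that each computed $C_i$ again lies in $\mathcal C$ (hence is a legitimate input to the next operation) and has a compact, polynomial-size representation; the evaluation hypothesis then bounds each of the at most linearly many operations by a polynomial, and since the cd-tree has polynomial size the total running time is polynomial.

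The step I expect to carry the real weight is the closedness hypothesis, which is precisely what keeps the scheme from collapsing. Without it, the constraint produced by $\varphi_{\skel(\mu)}^{\nu}$ could fall outside $\mathcal C$, so that neither its compact representability nor the polynomial evaluability of the next operation would be available — exactly the phenomenon flagged in the preceding discussion, where iterating the operation can turn a partitioned PQ-constraint into something no longer of that form. Hence the substance of any application of this theorem lies not in the routine bottom-up bookkeeping but in exhibiting a family $\mathcal C$ that is genuinely closed under the relevant $\mathcal R$-restricted operations and supports their efficient evaluation; the theorem itself merely packages the traversal once such a family is in hand.
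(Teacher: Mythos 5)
Your proposal is correct and follows essentially the same route as the paper: the paper states the bottom-up evaluation of $\varphi_{\skel(\mu)}^{\nu}$ along the cd-tree in the text and then asserts the theorem ``directly follows,'' and your write-up is precisely that algorithm, fleshed out with the inductive invariant justified by Theorem~\ref{thm:characterization} and the role of each hypothesis made explicit. Your added care about the trivial constraints at non-virtual vertices and the feasibility test at the root addresses details the paper leaves implicit, but does not change the approach.
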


When dropping the requirement that $\mathcal C$ has a compact
representation the algorithm becomes super-polynomial only in the
maximum degree $d$ of the virtual vertices (the number of possible
order-constraints for a set of size $d$ depends only on $d$).
Moreover, if the input of $\varphi_G^v$ consists of only $k$ order
constraints (whose sizes are bounded by a function of $d$), then
$\varphi_G^v$ can be evaluated by iterating over all combinations of
orders, applying a planarity test in every step.  This gives an
FPT-algorithm with parameter $d + k$ (running time $O(f(d + k)p(n))$,
where $f$ is a computable function depending only on $d + k$ and $p$
is a polynomial).  In other words, we obtain an FPT-algorithm where
the parameter is the sum of the maximum degree of the tree~$T$ and the
maximum number of edges leaving a cluster.  Note that this generalizes
the FPT-algorithm by Chimani and Klein~\cite{ck-ssscp-13} with respect
to the total number of edges connecting different clusters.

Moreover, Theorem~\ref{thm:solving-c-planarity-bottom-up} has the
following simple implication.  Consider a clustered graph where each
cluster is connected.  This restricts the skeletons of the cd-tree
such that non of the parent vertices is a cutvertex
(Lemma~\ref{lem:cutvertex-connected-components}).  Thus, we have
$\mathcal R$-restricted clustered graphs where $(G, v) \in \mathcal R$
implies that $v$ is not a cutvertex in $G$.  PQ-constraints are closed
under $\mathcal R$-restricted constrained-embedding operations as the
valid edge-orderings of non-cutvertices are PQ-representable and
planarity with PQ-constraints is basically equivalent to planarity
(one can model a PQ-tree with a simple gadget; see
Section~\ref{sec:preliminaries}).  Thus,
Theorem~\ref{thm:solving-c-planarity-bottom-up} directly implies that
c-planarity can be solved in polynomial time if each cluster is
connected.

\subsubsection{Related Work}
\label{sec:related-work-connected}

The above algorithm resulting from
Theorem~\ref{thm:solving-c-planarity-bottom-up} is more or less the
one described by Lengauer~\cite{l-hpta-89}.  The algorithm was later
rediscovered by Feng et al.~\cite{fce-pcg-95} who coined the term
``c-planarity''.  The algorithm runs in $O(c) \subseteq O(n^2)$ time
(recall that $c$ is the size of the cd-tree).
Dahlhaus~\cite{d-ltarc-98} improves the running time to $O(n)$.
Cortese et al.\cite{cbf-cpcccg-08} give a characterization that also
leads to a linear-time algorithm.

Goodrich et al.~\cite{gls-cpecg-06} consider the case where each
cluster is either connected or \emph{extrovert}.  Let $\mu$ be a node
in the cd-tree with parent $\mu'$.  The cluster $\pert(\mu)$ is
extrovert if the parent cluster $\pert(\mu')$ is connected and every
connected component in $\pert(\mu)$ is connected to a vertex not in
the parent $\pert(\mu')$.  They show that one obtains an equivalent
instance by replacing the extrovert cluster $\pert(\mu)$ by one
cluster for each of its connected components while requiring
additional PQ-constraints for the parent vertex in the resulting
skeleton.  In this instance every cluster is connected and the
additional PQ-constraints clearly do no harm.

Another extension to the case where every cluster must be connected is
given by Gutwenger et al.~\cite{gjl-acptcg-02}.  They give an
algorithm for the case where every cluster is connected with the
following exception.  Either, the disconnected clusters form a path in
the tree or for every disconnected cluster the parent and all siblings
are connected.  This has basically the effect that at most one
order-constraint in the input of a constrained-embedding operation is
not a PQ-tree.

Jel\'inek et al.~\cite{jjkl-cp-09-long,jjkl-cp-09} assume each cluster
to have at most two connected components and the underlying
(connected) graph to have a fixed planar embedding.  Thus, they
consider $\mathcal R$-restricted clustered graphs where $(G, v) \in
\mathcal R$ implies that $v$ is incident to at most two different
blocks.  The fixed embedding of the graph yields additional
restrictions that are not so easy to state within this model.

\section{Cutvertices with Two Non-Trivial Blocks}
\label{sec:c-planarity-pq}

The input of the \textsc{Simultaneous PQ-Ordering} problem consists of
several PQ-trees together with child-parent relations between them
(the PQ-trees are the nodes of a directed acyclic graph) such that the
leaves of every child form a subset of the leaves of its parents.
\textsc{Simultaneous PQ-Ordering} asks whether one can choose orders
for all PQ-trees \emph{simultaneously} in the sense that every
child-parent relation implies that the order of the leaves of the
parent are an extension of the order of the leaves of the child.  In
this way one can represent orders that cannot be represented by a
single PQ-tree.  For example, adding one or more children to a PQ-tree
$T$ restricts the set of orders represented by $T$ by requiring the
orders of different subsets of leaves to be represented by some other
PQ-tree.  Moreover, one can synchronize the orders of different trees
that share a subset of leaves by introducing a common child containing
these leaves.

\textsc{Simultaneous PQ-Ordering} is NP-hard but efficiently solvable
for so-called 2-fixed instances~\cite{br-spoacep-13}.  For every
biconnected planar graph $G$, there exists an instance of
\textsc{Simultaneous PQ-Ordering}, the \emph{PQ-embedding
  representation}, that represents all planar embeddings of
$G$~\cite{br-spoacep-13}.  It has the following properties.
\begin{itemize}
\item For every vertex $v$ in $G$ there is a PQ-tree $T(v)$, the
  \emph{embedding tree}, that has the edges incident to $v$ as leaves.
\item 
  For every solution of the PQ-embedding representation, setting the
  edge-ordering of every vertex $v$ to the order given by $T(v)$
  yields a planar embedding.  Moreover, one can obtain every embedding
  of $G$ in this way.
\item The instance remains 2-fixed 
  when adding up to one child to each embedding tree.
\end{itemize}
A PQ-embedding representation still exists if every cutvertex in $G$
is incident to at most two \emph{non-trivial blocks} (blocks that are
not just bridges)\cite{br-spoacep-11}.

\begin{theorem}
\label{thm:general-algorithm-sim-pq-ord}
C-planarity can be tested in $O(c^2) \subseteq O(n^4)$ time if every
virtual vertex in the skeletons of the cd-tree is incident to at most
two non-trivial blocks.
\end{theorem}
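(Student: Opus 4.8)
The plan is to combine the cd-tree characterization (Theorem~\ref{thm:characterization}) with \textsc{Simultaneous PQ-Ordering}. By Theorem~\ref{thm:characterization}, the clustered graph is c-planar if and only if all skeletons of the cd-tree can be embedded so that every virtual vertex and its twin receive the same edge-ordering. I would encode the admissible edge-orderings of each skeleton by its PQ-embedding representation and then enforce the twin condition by synchronizing the relevant embedding trees. The hypothesis that every virtual vertex is incident to at most two non-trivial blocks is precisely what guarantees, via~\cite{br-spoacep-11}, that each skeleton still admits a PQ-embedding representation despite not being biconnected in general.

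First, for each node $\mu$ of the cd-tree I would construct the PQ-embedding representation of $\skel(\mu)$. This provides, for every vertex $w$ of $\skel(\mu)$, an embedding tree $T(w)$ whose leaves are the edges incident to $w$, such that the solutions of the representation are exactly the planar embeddings of $\skel(\mu)$. Taking the disjoint union of these representations over all nodes yields one instance of \textsc{Simultaneous PQ-Ordering} whose solutions correspond to independently chosen planar embeddings of the skeletons.

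Next I would impose the twin condition. A virtual vertex $\nu$ and its twin $\twin(\nu)$ are incident to exactly the same edges of $G$, so $T(\nu)$ and $T(\twin(\nu))$ have identical leaf sets. To force their orders to agree, I would add, for each twin pair, a single common child PQ-tree on the shared leaves and make it a child of both $T(\nu)$ and $T(\twin(\nu))$; requiring both parents to extend this child forces them to the same cyclic order. Since every virtual vertex has exactly one twin, this adds exactly one child to each embedding tree of a virtual vertex and none to the trees of non-virtual vertices. This is exactly the situation in which, by~\cite{br-spoacep-13}, the PQ-embedding representation stays 2-fixed, so the entire synchronized instance is 2-fixed.

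Finally, a solution of this instance selects consistent orders for all embedding trees, hence consistent planar embeddings of all skeletons with matching twin edge-orderings; by Theorem~\ref{thm:characterization} this is a c-planar embedding, and conversely, so the instance is solvable if and only if the clustered graph is c-planar. As 2-fixed \textsc{Simultaneous PQ-Ordering} can be solved in time quadratic in the instance size~\cite{br-spoacep-13}, and the constructed instance has size $O(c)$, the total running time is $O(c^2) \subseteq O(n^4)$. I expect the main obstacle to be the 2-fixedness argument: one must verify carefully that synchronizing twins contributes only a single child to each virtual vertex's embedding tree, and that the two-non-trivial-blocks hypothesis is what keeps every such embedding tree a genuine PQ-tree and the overall instance within the 2-fixed regime, since a cutvertex with three or more non-trivial blocks would make the local set of edge-orderings non-PQ-representable.
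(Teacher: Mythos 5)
Your proposal is correct and follows essentially the same route as the paper's proof: PQ-embedding representations for each skeleton (using the two-non-trivial-blocks hypothesis via~\cite{br-spoacep-11}), a common child PQ-tree to synchronize each twin pair, preservation of 2-fixedness since each embedding tree gains at most one child, equivalence with c-planarity via Theorem~\ref{thm:characterization}, and the quadratic-time algorithm for 2-fixed instances giving $O(c^2)$. Your closing remarks on verifying the 2-fixedness bookkeeping match the paper's reliance on the stated properties of the PQ-embedding representation, so no gap remains.
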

\begin{proof}
  Let $G$ be a clustered graph with cd-tree $T$.  For the skeleton of
  each node in~$T$, we get a PQ-embedding representation with the
  above-mentioned properties.  Let $\mu$ be a node of $T$ and let
  $\nu$ be a virtual vertex in $\skel(\mu)$.  By the above properties,
  the embedding representation of $\mu$ contains the embedding tree
  $T(\nu)$ representing the valid edge-orderings of $\nu$.  Moreover,
  for $\twin(\nu)$ there is an embedding tree $T(\twin(\nu))$ in the
  embedding representation of the skeleton containing $\twin(\nu)$.
  To ensure that $\nu$ and $\twin(\nu)$ have the same edge-ordering,
  one can simply add a PQ-tree as a common child of $T(\nu)$ and
  $T(\twin(\nu))$.  We do this for every virtual node in the skeletons
  of $T$.  Due to the last property of the PQ-embedding
  representations, the resulting instance remains 2-fixed and can thus
  be solved efficiently.

  Every solution of this \textsc{Simultaneous PQ-Ordering} instance
  $D$ yields planar embeddings of the skeletons such that every
  virtual vertex and its twin have the same edge-ordering.
  Conversely, every such set of embeddings yields a solution for $D$.
  It thus follows by the characterization in
  Theorem~\ref{thm:characterization} that solving $c$-planarity is
  equivalent to solving $D$.  The size of $D$ is linear in the size
  $c$ of the cd-tree $T$.  Moreover, solving \textsc{Simultaneous
    PQ-Ordering} for 2-fixed instances can be done in quadratic
  time~\cite{br-spoacep-13}, yielding the running time $O(c^2)$.
\end{proof}

Theorem~\ref{thm:general-algorithm-sim-pq-ord} includes the following
interesting cases.  The latter extends the result by Jel\'inek et
al.~\cite{jstv-cp-09} from four to five outgoing edges per cluster.
  
\begin{corollary}
  C-planarity can be tested in $O(c^2) \subseteq O(n^4)$ time if every
  cluster and every co-cluster has at most two connected components.
\end{corollary}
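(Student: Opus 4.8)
The plan is to reduce the corollary to Theorem~\ref{thm:general-algorithm-sim-pq-ord} by showing that its hypothesis---every cluster and every co-cluster has at most two connected components---implies the hypothesis of the theorem, namely that every virtual vertex in every skeleton is incident to at most two non-trivial blocks. Once this implication is established, the running time $O(c^2) \subseteq O(n^4)$ is inherited verbatim from the theorem, so the real content is entirely structural.

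First I would recall the translation between connected components of clusters/co-clusters and the block structure of skeletons provided by Lemma~\ref{lem:cutvertex-connected-components} and Lemma~\ref{lem:connected-clusters-no-cutvertices}. The key observation is that for a virtual vertex $\nu$ in $\skel(\mu)$, the blocks incident to $\nu$ correspond exactly to the connected components of the expansion graph $\exp(\nu)$ that $\nu$ ``sees.'' In the unrooted view, the cut corresponding to the tree edge at $\nu$ splits $V$ into $\exp(\nu)$ on one side and $\exp(\twin(\nu))$ on the other; when rooting the cd-tree so that $\mu$ is above $\nu$, the expansion graph $\exp(\nu)$ is exactly the \emph{cluster} hanging below $\nu$, whereas $\exp(\twin(\nu))$ is its \emph{co-cluster}. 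By Lemma~\ref{lem:cutvertex-connected-components}, distinct blocks incident to $\nu$ force distinct connected components in the graph on the \emph{other} side of the cut, which is governed by the number of components of that side.

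The heart of the argument is therefore a clean case distinction on the orientation of the cut. For a virtual vertex $\nu$ in $\skel(\mu)$, I would argue that the number of non-trivial blocks incident to $\nu$ is bounded by the number of connected components on the side of the cut \emph{not} containing the pertinent graph of $\nu$: each distinct block incident to $\nu$ attaches, via Lemma~\ref{lem:cutvertex-connected-components}, to a distinct connected component of that opposite side. Since every cut is the cut of some cluster (with the two sides being the cluster and its co-cluster), and by hypothesis both the cluster and the co-cluster have at most two connected components, the number of distinct blocks---and hence the number of non-trivial blocks---incident to $\nu$ is at most two. Applying this to every virtual vertex in every skeleton yields exactly the hypothesis of Theorem~\ref{thm:general-algorithm-sim-pq-ord}.

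The main obstacle, and the step requiring the most care, is pinning down precisely which side of the cut controls the block count at $\nu$ and verifying that distinct incident blocks really do map injectively to distinct connected components of that side. Lemma~\ref{lem:cutvertex-connected-components} gives one direction (a cutvertex with $b$ incident blocks forces at least $b$ components on the twin side), but I must confirm that the ``at most two components'' bound on \emph{both} the cluster and co-cluster sides suffices to cap the blocks at $\nu$ regardless of orientation---in particular that no single component can spawn two non-trivial blocks at $\nu$ in a way that escapes the count. This amounts to checking that the correspondence between blocks at $\nu$ and components across the cut is tight enough, which the structural description of skeletons in Section~\ref{sec:cd-tree} (contracting each subtree's vertex set to a single virtual vertex) makes transparent: a component on the far side that connects to $\nu$ through several attachment points still yields a single block through $\nu$ only if those attachments lie in one block, so the two-component bound is exactly what forbids a third non-trivial block.
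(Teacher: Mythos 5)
Your proposal is correct and follows essentially the same route as the paper: identify the expansion graphs on both sides of every cut with the clusters and co-clusters, invoke Lemma~\ref{lem:cutvertex-connected-components} to bound the number of blocks incident to each virtual vertex by the number of connected components of its twin's expansion graph (hence by two), and then apply Theorem~\ref{thm:general-algorithm-sim-pq-ord}. The injectivity concern you raise in your final paragraph is exactly what Lemma~\ref{lem:cutvertex-connected-components} already settles, so no extra argument is needed.
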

\begin{proof}
  Note that the expansion graphs of nodes in skeletons of $T$ are
  exactly the clusters and co-clusters.  Thus, the expansion graphs
  consist of at most two connected components.  By
  Lemma~\ref{lem:cutvertex-connected-components} the cutvertices in
  skeletons of $T$ are incident to at most two different blocks.
  Thus, we can simply apply
  Theorem~\ref{thm:general-algorithm-sim-pq-ord}.
\end{proof}

\begin{corollary}
  \label{cor:degree-5}
  C-planarity can be tested in $O(n^2)$ time if every cluster has at
  most five outgoing edges.
\end{corollary}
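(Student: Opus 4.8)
The plan is to derive the corollary from Theorem~\ref{thm:general-algorithm-sim-pq-ord} in two moves: first show that the five-outgoing-edges hypothesis forces every virtual vertex to be incident to at most two non-trivial blocks, and then sharpen the generic $O(c^2)$ running time to $O(n^2)$ by arguing that the cd-tree has linear size.

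First I would relate the degree of a virtual vertex to the outgoing edges of a cluster. The edges incident to a virtual vertex $\nu$ are exactly the edges of $G$ crossing the corresponding cut, i.e., the outgoing edges of the expansion graph $\exp(\nu)$. Since the expansion graphs are precisely the clusters and co-clusters, and since a cluster and its co-cluster induce the same cut and hence have the same number of outgoing edges, the hypothesis that every cluster has at most five outgoing edges implies that \emph{every} virtual vertex---whether it points toward a child cluster or toward the parent co-cluster---has degree at most five.

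Next I would turn this degree bound into a bound on non-trivial blocks. A non-trivial block incident to $\nu$ is either a bundle of parallel edges or a biconnected subgraph on at least three vertices; in either case $\nu$ has at least two incident edges inside that block. Hence each non-trivial block at $\nu$ consumes at least two of the at most five edges incident to $\nu$, so $\nu$ is incident to at most $\lfloor 5/2 \rfloor = 2$ non-trivial blocks. This is exactly the hypothesis of Theorem~\ref{thm:general-algorithm-sim-pq-ord}, which yields a test in $O(c^2)$ time. Five is the largest value for which this works: degree six could already admit three non-trivial blocks, whereas degree four gives the same conclusion, so the method extends the four-outgoing-edges result of Jel\'inek et al.\ essentially for free.

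Finally I would improve $O(c^2)$ to $O(n^2)$ by bounding $c$. Since the number of edges in all skeletons is twice the total size of all cuts, and every cut now has constant size, it suffices to bound the number of cuts, i.e., the number of cd-tree edges. Assuming the cd-tree has no redundant degree-two inner nodes---which a linear-time preprocessing can contract---the tree has $n$ leaves and hence $O(n)$ nodes and edges, so $c \in O(n)$ and the bound becomes $O(n^2)$. There is no deep obstacle; the only points requiring care are that the hypothesis on clusters also controls co-clusters (via the shared-cut observation) and that the constant cut size is what linearizes the cd-tree and thereby separates this $O(n^2)$ bound from the generic $O(c^2) = O(n^4)$ bound.
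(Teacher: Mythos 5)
Your proof is correct and follows essentially the same route as the paper's: bound the degree of every virtual vertex by five via the cut/outgoing-edge correspondence (covering co-clusters through the shared cut), conclude that at most two non-trivial blocks can be incident to any virtual vertex, apply Theorem~\ref{thm:general-algorithm-sim-pq-ord}, and linearize $c$ using the linear number of constant-size cuts. Your added details---the explicit counting argument that each non-trivial block consumes at least two incident edges, and the contraction of redundant degree-two tree nodes---only make explicit what the paper asserts implicitly.
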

\begin{proof}
  Let $\mu$ be a node with virtual vertex $\nu$ in its skeleton.  The
  edges incident to $\nu$ in $\skel(\mu)$ are exactly the edges that
  separate $\exp(\nu)$ from the rest of the graph $\exp(\twin(\nu))$.
  Thus, if every cluster has at most five outgoing edges, the virtual
  vertices in skeletons of $T$ have maximum degree~5.  With five edges
  incident to a vertex $\nu$, one cannot get more than two non-trivial
  blocks incident to $\nu$.  It follows from
  Theorem~\ref{thm:general-algorithm-sim-pq-ord} that we can test
  c-planarity in $O(c^2)$ time.  As we have a linear number of cuts,
  each of constant size (at most~$5$), we get $c \in O(n)$.
\end{proof}

\section{Conclusion}
\label{sec:conclusion}

In this paper we have introduced the cd-tree and we have shown that it
can be used to reformulate the classic c-planarity problem as a
constrained embedding problem.  Afterwards, we interpreted several
previous results on c-planarity from this new perspective.  In in many
cases the new perspective simplifies these algorithms or at least
gives a better intuition why the imposed restrictions are helpful
towards making the problem tractable.  In some cases, the new view
allowed us to generalize and extend previous results to larger sets of
instances.

We believe that the constrained embedding problems we defined provide
a promising starting point for further research, e.g., by studying
restricted variants to further deepen the understanding of the
c-planarity problem. 

\ifdefined\LNCS
\bibliographystyle{splncs03}
\fi
\ifdefined\arXiv
\bibliographystyle{plain}
\fi
\bibliography{strings,c-planarity-pq}

\end{document}